\documentclass{IEEEtran}
\usepackage[T1]{fontenc}
\usepackage{amsmath,amssymb,amsfonts,amsthm}
\usepackage{graphicx}
\usepackage{textcomp,nicefrac}
\usepackage{bm}
\usepackage{microtype}

\DeclareFontShape{T1}{ptm}{m}{scit}{<->ssub * ptm/m/it}{}

\newtheorem{definition}{Definition}

\newtheorem{proposition}{Proposition}
\newtheorem{theorem}{Theorem}
\newtheorem{corollary}{Corollary}

\theoremstyle{remark}

\theoremstyle{plain}

\newcommand{\R}{\mathbb{R}}

\newcommand{\Nzero}{\mathbb{N}_0}
\newcommand{\Prob}{\mathbb{P}}
\newcommand{\Exp}{\mathbb{E}}
\newcommand{\ind}{\mathbf{1}}
\newcommand{\B}{\mathcal{B}}

\begin{document}
\setlength{\emergencystretch}{5em}
\setlength{\abovedisplayskip}{3pt}
\setlength{\belowdisplayskip}{3pt}
\setlength{\abovedisplayshortskip}{3pt}
\setlength{\belowdisplayshortskip}{3pt}

\title{Multi-Threshold Sampling: Signal Space, Sampling Operators, and First Recorded Crossing-Time Distributions}

\author{Ao Qiu and Qingguo Xie
  \thanks{This work was supported in part by the National Natural Science Foundation of China under Grant 625B2079, Grant 61927801, Grant 62250002, and Grant 62050288. \textit{(Corresponding author: Qingguo Xie.)}}%
  \thanks{Ao Qiu is with the Department of Biomedical Engineering, Huazhong University of Science and Technology, Wuhan 430074 China (e-mail: aqiu@hust.edu.cn).}%
  \thanks{Qingguo Xie is with the Department of Biomedical Engineering, Huazhong University of Science and Technology, Wuhan 430074 China; with the Wuhan National Laboratory for Optoelectronics, Wuhan 430074 China; and with the Department of Electronic Engineering and Information Science, University of Science and Technology of China, Hefei 230026 China (e-mail: qgxie@hust.edu.cn).}}

\maketitle

\begin{abstract}
Multi-threshold (MT) sampling records crossing times at selected thresholds for parameter estimation and waveform reconstruction. We develop a mathematical framework that defines the signal space and sampling operators and maps the signal distribution to the distribution of recorded crossings. The framework distinguishes model mismatch, signal noise, threshold mismatch, and threshold noise. Finite point measures accommodate variable crossing counts and preserve multiplicity after time quantization. Stochastic crossing, timing, and selection operators and their associated Markov kernels describe the sampling process. For nonhomogeneous Poisson photon arrivals and deterministic MT sampling, we derive the exact cumulative distribution function (CDF) of the first recorded crossing time at a specified threshold within a recorded-time interval. Under additional regularity and local monotonicity conditions, this CDF admits a one-dimensional Fourier representation. For a scintillation pulse model, numerical CDFs and standard deviations of the first recorded crossing time agree with independent Monte Carlo simulations, providing a quantitative basis for comparing threshold settings. The framework provides a foundation for MT performance evaluation, parameter estimation, waveform reconstruction, and sampler design.
\end{abstract}

\section{Introduction}
\label{sec:introduction}

Multi-threshold (MT) sampling compares a signal with multiple thresholds and records the crossing times, typically using comparators and time-to-digital converters (TDCs)~\cite{xi2013fpga}. It represents the signal through crossing times, whereas conventional uniform time-domain sampling records amplitudes on a fixed time grid~\cite{xie2009potentials}. For high-speed signals, MT sampling can reduce data volume, hardware cost, and power consumption compared with high-rate uniform time-domain sampling~\cite{palka2017multichannel}. The recorded crossings support parameter estimation and waveform reconstruction~\cite{kim2009multi}. Initially proposed for positron emission tomography (PET)~\cite{xie2005new}, MT sampling has applications in time-of-flight PET (TOF-PET)~\cite{zhang2025performance}, oil well logging~\cite{zhao2023fpga}, and photon-counting X-ray imaging~\cite{zhang2024feasibility,chen2023multi}.

\begin{figure*}[!t]
  \centering
  \includegraphics[width=\textwidth]{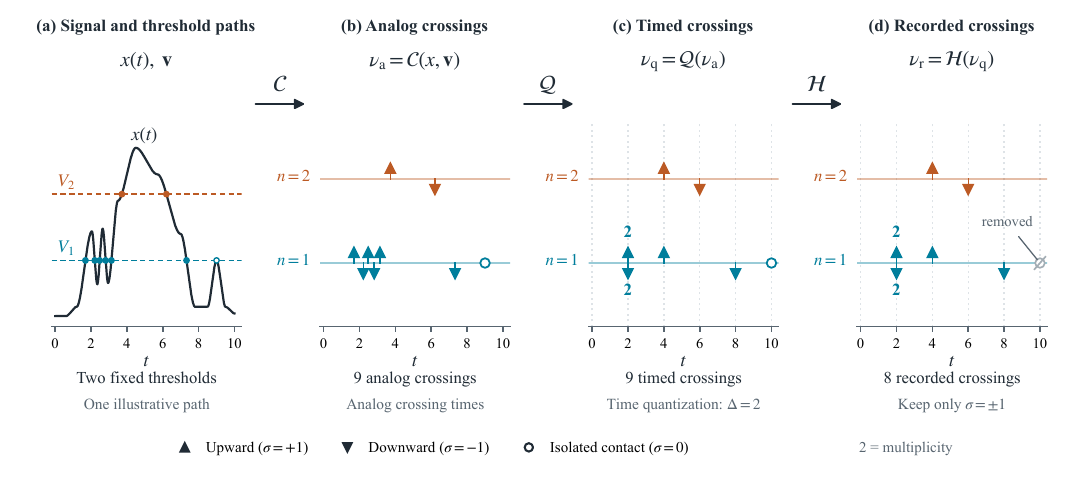}
  \caption{Deterministic MT sampling.
  (a) One signal path and two fixed thresholds.
  (b) The crossing operator $\mathcal C$ produces nine analog crossings,
  including an isolated contact at threshold $n=1$.
  (c) The timing operator $\mathcal Q$ quantizes analog crossing times using
  $r_\Delta(t)=\Delta\lfloor t/\Delta+1/2\rfloor$, with $\Delta=2$
  in arbitrary time units. At $n=1$ and $t=2$, the upward and downward
  timed crossings each have multiplicity two; the total crossing count
  remains nine.
  (d) The selection operator $\mathcal H$ retains only timed crossings
  with direction marks $\sigma=\pm1$, leaving eight recorded crossings.
  The crossed-out gray symbol indicates the removed timed crossing.
  Upward and downward symbols are vertically offset for legibility;
  colors identify thresholds.}
  \label{fig:mt_sampling_overview}
\end{figure*}

Connecting the signal distribution to the distribution of recorded crossings requires a mathematical description of crossing, timing, and selection. As in uniform time-domain sampling~\cite{nyquist1928certain,shannon1949communication,jerri1977shannon}, the signal space and sampling operators provide the basis for this description~\cite{unser2000sampling,vetterli2002sampling}. Three difficulties arise. First, signal and threshold paths may evolve continuously, jump, or coincide over an interval, requiring a consistent definition of analog crossings and conditions ensuring finitely many crossings in the observation window. Second, crossing counts vary across realizations, and time quantization can make distinct analog crossings share the same time, threshold index, and direction mark. The observation space must accommodate these variable counts and preserve multiplicity~\cite{last2018lectures}. Third, analog crossing times depend on random signal and threshold paths, while timing and selection further shape the recorded crossings. Their combined effects require a probabilistic description of the sampling process.

Previous work formulated prior-based MT sampling as a structured inverse problem for parameter estimation and threshold design within strictly unimodal pulse families~\cite{qiu2026prior}. Here, we develop a mathematical framework that defines the signal space and sampling operators and maps the signal distribution to the distribution of recorded crossings.
\emph{1) Signal space and crossing operator.} We model signal and threshold paths with continuous evolution and jumps, separating nominal components from model mismatch, signal noise, threshold mismatch, and threshold noise. Using comparison states, we construct a measurable crossing operator on an admissible domain that accommodates intervals of equality and ensures finitely many analog crossings in the observation window.
\emph{2) Distribution of recorded crossings.} Finite point measures accommodate variable crossing counts and preserve multiplicity after time quantization. Timing assigns time coordinates to analog crossings, and selection retains or removes the resulting timed crossings. Stochastic operators and their associated Markov kernels describe how signal and sampler randomness, time quantization, and selection determine the distribution of recorded crossings. Figure~\ref{fig:mt_sampling_overview} illustrates these three stages for a deterministic MT sampler.
\emph{3) First recorded crossing-time distribution.} For nonhomogeneous Poisson photon arrivals and deterministic MT sampling, we derive the exact cumulative distribution function (CDF) of the first recorded crossing time at a specified threshold within a recorded-time interval. The CDF also gives the recording probability. Under additional regularity and local monotonicity conditions, it admits a one-dimensional Fourier representation. For a scintillation pulse model with an exponentially decaying photon-arrival intensity and a biexponential single-photon response, numerical CDFs and standard deviations of the first recorded crossing time agree with independent Monte Carlo simulations.
The framework provides a mathematical foundation for MT performance evaluation, algorithm development, and sampler design.

\section{Signal Space}
\label{sec:signal_space}

We represent each signal realization by a path on a finite observation window, allowing continuous evolution and jumps. A random signal takes values in this path space and describes variability across realizations.

\begin{definition}[Signal path and random-signal spaces]
\label{def:signal_path_space}
Let
\begin{equation}
  \mathcal{T}\triangleq[T_-,T_+]\subset\R,
  \qquad T_-<T_+,
  \label{eq:observation_window}
\end{equation}
be a fixed compact observation window for signal and threshold paths.
Let
\begin{equation}
  \mathsf X
  \triangleq
  D(\mathcal T;\R)
  \label{eq:signal_path_space}
\end{equation}
be the signal path space, consisting of real-valued paths that are right-continuous on \([T_-,T_+)\) and have finite left limits on \((T_-,T_+]\). Every signal path in \(\mathsf X\) is bounded on \(\mathcal T\). We use the uniform norm
\begin{equation}
  \lVert x\rVert_\infty
  \triangleq
  \sup_{t\in\mathcal T}|x(t)|
\end{equation}
to measure the magnitude of a signal path. Equip \(\mathsf X\) with the coordinate \(\sigma\)-algebra
\begin{equation}
  \B(\mathsf X)
  \triangleq
  \sigma\{x\mapsto x(t):t\in\mathcal T\}.
  \label{eq:path_sigma_algebra}
\end{equation}
This coordinate \(\sigma\)-algebra coincides with the Borel \(\sigma\)-algebra of the Polish Skorokhod \(J_1\) topology~\cite{billingsley1999convergence}.
On a complete probability space \((\Omega,\mathcal F,\Prob)\), define the random-signal space
\begin{equation}
  \mathfrak Y
  \triangleq
  \left\{[Y]:
  \begin{array}{c}
  Y:\Omega\to\mathsf X\text{ is measurable},\\
  \Exp\lVert Y\rVert_\infty<\infty
  \end{array}\right\},
  \label{eq:signal_space}
\end{equation}
where \([Y]\) denotes the equivalence class of \(Y\) under the relation
\begin{equation*}
  Y\sim Z
  \quad\Longleftrightarrow\quad
  \Prob\{\lVert Y-Z\rVert_\infty=0\}=1.
\end{equation*}
Equivalent random signals have identical paths with probability one. We write \(Y\) for \([Y]\), choosing a measurable representative when evaluating \(Y(\omega)\), and set
\begin{equation}
  Y(t,\omega)
  \triangleq
  Y(\omega)(t),
  \qquad
  (t,\omega)\in\mathcal T\times\Omega.
\end{equation}
\end{definition}

We identify each \(x\in\mathsf X\) with the class of the constant random element \(\omega\mapsto x\) in \(\mathfrak Y\).

We next introduce a nominal signal model and decompose each random signal into the nominal signal, model mismatch, and signal noise.

\begin{definition}[Nominal signal model and family]
\label{def:nominal_model_family}
Let \(M\geq1\) be an integer and let \(\Theta\subseteq\R^M\) be a nonempty Borel parameter set. A nominal signal model is a map
\begin{equation}
  f:\mathcal T\times\Theta\to\R,
  \qquad (t,\bm\theta)\mapsto f(t;\bm\theta),
  \label{eq:nominal_model}
\end{equation}
such that the nominal signal \(f_{\bm\theta}\triangleq f(\cdot;\bm\theta)\in\mathsf X\) for every \(\bm\theta\in\Theta\). We also require \(\bm\theta\mapsto f_{\bm\theta}\) to be measurable from \((\Theta,\B(\Theta))\) to \((\mathsf X,\B(\mathsf X))\). The corresponding nominal signal family is
\begin{equation}
  \mathfrak F\triangleq
  \{f_{\bm\theta}:\bm\theta\in\Theta\}\subseteq\mathsf X.
\end{equation}
\end{definition}

\begin{definition}[Model-relative signal decomposition]
\label{def:signal_decomposition}
Let \(\{Y_{\bm\theta}:\bm\theta\in\Theta\}\subseteq\mathfrak Y\) be a parameterized family of random signals. For each \(\bm\theta\in\Theta\), define the mean signal and signal noise by
\begin{equation}
  s_{\bm\theta}(t)\triangleq\Exp[Y_{\bm\theta}(t)],
  \qquad \varepsilon_{\bm\theta}\triangleq Y_{\bm\theta}-s_{\bm\theta}.
  \label{eq:mean_noise_decomposition}
\end{equation}
Define the model mismatch by
\begin{equation}
  d_{\bm\theta}\triangleq s_{\bm\theta}-f_{\bm\theta}\in\mathsf X.
  \label{eq:model_mismatch}
\end{equation}
Then, in \(\mathfrak Y\),
\begin{equation}
  Y_{\bm\theta}=f_{\bm\theta}+d_{\bm\theta}+\varepsilon_{\bm\theta}.
  \label{eq:signal_decomposition}
\end{equation}
The three terms on the right-hand side are called the nominal signal, model mismatch, and signal noise, respectively.
\end{definition}

For each \(\bm\theta\in\Theta\), dominated convergence gives \(s_{\bm\theta}\in\mathsf X\), with \(\lVert s_{\bm\theta}\rVert_\infty\leq\Exp\lVert Y_{\bm\theta}\rVert_\infty\). The map \(x\mapsto x-s_{\bm\theta}\) is measurable from \(\mathsf X\) to itself, so \(\varepsilon_{\bm\theta}=Y_{\bm\theta}-s_{\bm\theta}\) is an \(\mathsf X\)-valued random element. Moreover,
\begin{equation}
  \Exp\lVert\varepsilon_{\bm\theta}\rVert_\infty
  \leq2\Exp\lVert Y_{\bm\theta}\rVert_\infty<\infty.
  \label{eq:centered_noise_integrability}
\end{equation}
Hence \(\varepsilon_{\bm\theta}\in\mathfrak Y\), with \(\Exp[\varepsilon_{\bm\theta}(t)]=0\) for every \(t\in\mathcal T\). The distribution of \(\varepsilon_{\bm\theta}\) may depend on \(\bm\theta\).

For fixed \(\bm\theta\) and nominal signal \(f_{\bm\theta}\), deterministic model mismatch and pointwise zero-mean signal noise uniquely determine the decomposition. Taking expectations gives \(d_{\bm\theta}=s_{\bm\theta}-f_{\bm\theta}\), which then fixes \(\varepsilon_{\bm\theta}=Y_{\bm\theta}-s_{\bm\theta}\).

\section{Crossing Operator}
\label{sec:crossing_operator}

\subsection{Representation of Crossings}
\label{subsec:crossing_representation}

We refer to the outputs of the crossing, timing, and selection operators as \emph{analog crossings}, \emph{timed crossings}, and \emph{recorded crossings}, respectively. We use \emph{crossing} when the processing stage need not be specified. Each crossing is represented by its time, threshold index, and direction mark. Finite point measures represent crossings at all three stages, with the mass at each location giving its multiplicity and the total mass giving the crossing count~\cite{last2018lectures}.

\begin{definition}[Finite point measures]
\label{def:finite_point_measures}
For an integer \(N\geq1\), let \([N]\triangleq\{1,\ldots,N\}\) be the set of threshold indices.
Define the space of triples \((t,n,\sigma)\) consisting of a time, a threshold index, and a direction mark:
\begin{equation}
  \mathsf E\triangleq
  \R\times[N]\times\{-1,-\tfrac12,0,\tfrac12,1\}.
  \label{eq:crossing_mark_space}
\end{equation}
The direction marks describe transitions between opposite sides of the threshold, transitions into or out of equality with the threshold, and isolated contacts, as specified below. Equip \(\mathsf E\) with the product Borel \(\sigma\)-algebra. For \(z\in\mathsf E\), let \(\delta_z\) denote the Dirac measure at \(z\). The space of finite point measures on \(\mathsf E\) is
\begin{equation}
  \mathsf N_{\mathrm f}(\mathsf E)
  \triangleq
  \left\{\sum_{i=1}^{k}\delta_{z_i}:
    k\in\Nzero,\ z_i\in\mathsf E\right\}.
  \label{eq:finite_point_measure_space}
\end{equation}
For \(\nu\in\mathsf N_{\mathrm f}(\mathsf E)\), \(k_\nu\triangleq\nu(\mathsf E)\) denotes its total mass.
Equip \(\mathsf N_{\mathrm f}(\mathsf E)\) with the \(\sigma\)-algebra
\begin{equation}
  \B\!\left(\mathsf N_{\mathrm f}(\mathsf E)\right)
  \triangleq
  \sigma\left\{\nu\mapsto\nu(B):B\in\B(\mathsf E)\right\}.
  \label{eq:point_measure_sigma_algebra}
\end{equation}
A point measure is simple if \(\nu(\{z\})\leq1\) for all \(z\in\mathsf E\).
\end{definition}

\subsection{Threshold Model and Crossing Operator}
\label{subsec:threshold_model_crossing_operator}

The crossing operator compares a signal path with threshold paths. We model each threshold path as the sum of a nominal threshold, threshold mismatch, and threshold noise.

\begin{definition}[Threshold model]
\label{def:threshold_model}
Let
\begin{equation}
  \mathbf V\triangleq(V_n)_{n=1}^N\in\R^N
  \label{eq:nominal_thresholds}
\end{equation}
be the nominal threshold vector. We identify each \(V_n\) with the constant path \(t\mapsto V_n\), and \(\mathbf V\) with the corresponding element of \(\mathsf X^N\). Equip \(\mathsf X^N\) with the product norm
\begin{equation}
  \lVert\mathbf v\rVert_{\mathsf X^N}
  \triangleq
  \max_{n\in[N]}\lVert v_n\rVert_\infty,
  \qquad \mathbf v=(v_n)_{n=1}^N\in\mathsf X^N,
  \label{eq:threshold_path_norm}
\end{equation}
and the product \(\sigma\)-algebra \(\B(\mathsf X)^{\otimes N}\), denoted by \(\B(\mathsf X^N)\). Let \(\mathbf b=(b_n)_{n=1}^N\in\mathsf X^N\) be the deterministic threshold mismatch. On an auxiliary complete probability space \((\Omega_{\mathcal C},\mathcal F_{\mathcal C},\Prob_{\mathcal C})\), let \(\bm\eta=(\eta_n)_{n=1}^N\) be the threshold noise, a measurable \(\mathsf X^N\)-valued random element satisfying
\begin{equation}
  \begin{gathered}
  \Exp_{\mathcal C}\lVert\bm\eta\rVert_{\mathsf X^N}<\infty,\\
  \Exp_{\mathcal C}[\eta_n(t)]=0,
  \qquad n\in[N],\ t\in\mathcal T.
  \end{gathered}
  \label{eq:threshold_fluctuation_conditions}
\end{equation}
The threshold paths are
\begin{equation}
  \widetilde{\mathbf V}\triangleq\mathbf V+\mathbf b+\bm\eta,
  \qquad \widetilde V_n(t)=V_n+b_n(t)+\eta_n(t).
  \label{eq:instantaneous_thresholds}
\end{equation}
Thus, \(\Exp_{\mathcal C}[\widetilde V_n(t)]=V_n+b_n(t)\): the threshold mismatch \(\mathbf b\) describes the systematic displacement of the mean threshold paths, whereas the threshold noise \(\bm\eta\) describes zero-mean random deviations from them.
\end{definition}

Comparison states indicate whether the signal lies below, at, or above each threshold. Admissibility ensures finitely many analog crossings in the observation window.

\begin{definition}[Comparison states and admissibility]
\label{def:comparison_states_admissibility}
For a signal path \(x\in\mathsf X\) and threshold paths \(\mathbf v=(v_n)_{n=1}^N\in\mathsf X^N\), define the signal--threshold differences
\begin{equation}
  g_n(t;x,\mathbf v)\triangleq x(t)-v_n(t),
  \qquad n\in[N].
  \label{eq:path_residual}
\end{equation}
The associated comparison states are
\begin{equation}
  q_n(t;x,\mathbf v)
  \triangleq\operatorname{sgn}g_n(t;x,\mathbf v)
  \in\{-1,0,+1\},
  \label{eq:comparison_state}
\end{equation}
where \(\operatorname{sgn}(0)=0\). Equip \(\mathsf X\times\mathsf X^N\) with the product \(\sigma\)-algebra \(\B(\mathsf X)\otimes\B(\mathsf X^N)\).
Fix a Borel set \(\mathsf D_{\mathcal C}\subseteq\mathsf X\times\mathsf X^N\) such that, for every \((x,\mathbf v)\in\mathsf D_{\mathcal C}\) and every \(n\in[N]\), there exist an integer \(m_n\geq1\) and a partition
\begin{equation}
  \begin{gathered}
  T_-=t_{n,0}<t_{n,1}<\cdots<t_{n,m_n}=T_+,\\
  q_n(\cdot;x,\mathbf v)
  \text{ is constant on }(t_{n,j-1},t_{n,j}),\\
  j=1,\ldots,m_n.
  \end{gathered}
  \label{eq:finite_state_partition}
\end{equation}
The selected set \(\mathsf D_{\mathcal C}\) is the admissible domain of the crossing operator. Equip it with the \(\sigma\)-algebra
\begin{equation*}
  \B(\mathsf D_{\mathcal C})
  \triangleq
  \{A\cap\mathsf D_{\mathcal C}:
  A\in\B(\mathsf X)\otimes\B(\mathsf X^N)\}.
\end{equation*}
\end{definition}

We use the comparison states immediately before, at, and immediately after a time to determine whether an analog crossing occurs and to assign its direction mark.

\begin{definition}[Analog crossing times and direction marks]
\label{def:crossing_times_and_marks}
For \((x,\mathbf v)\in\mathsf D_{\mathcal C}\) and \(t\in(T_-,T_+)\), the one-sided limits of the comparison states
\begin{equation}
  q_n(t^{\pm};x,\mathbf v)
  \triangleq
  \lim_{h\to0^+}q_n(t\pm h;x,\mathbf v)
  \label{eq:one_sided_comparison_states}
\end{equation}
exist in \(\{-1,0,+1\}\). Define the analog crossing times at threshold \(n\) by
\begin{equation}
  \begin{gathered}
  \mathcal R_n(x,\mathbf v)
  \triangleq\bigl\{t\in(T_-,T_+):\\
  |q_n(t^{-})-q_n(t)|+|q_n(t)-q_n(t^{+})|>0\bigr\},
  \end{gathered}
  \label{eq:crossing_time_set}
\end{equation}
where the arguments \((x,\mathbf v)\) on the right are suppressed. Assign to each \(t\in\mathcal R_n(x,\mathbf v)\) the direction mark
\begin{equation}
  \sigma_n(t;x,\mathbf v)
  \triangleq
  \frac{q_n(t^{+};x,\mathbf v)-q_n(t^{-};x,\mathbf v)}{2}.
  \label{eq:crossing_mark}
\end{equation}
\end{definition}

Direction marks of magnitude \(1\) indicate transitions between opposite sides of the threshold; those of magnitude \(\tfrac12\) indicate transitions into or out of equality. Positive and negative signs indicate increases and decreases in the comparison state, respectively. A zero mark indicates an isolated contact, with the signal on the same side of the threshold immediately before and after.

\begin{definition}[Crossing operator]
\label{def:crossing_operator}
The crossing operator combines analog crossings at all thresholds into a finite point measure through the map
\begin{equation}
  \mathcal C:\mathsf D_{\mathcal C}
  \to\mathsf N_{\mathrm f}(\mathsf E)
\end{equation}
defined by
\begin{equation}
  \mathcal C(x,\mathbf v)
  \triangleq
  \sum_{n=1}^{N}
  \sum_{t\in\mathcal R_n(x,\mathbf v)}
  \delta_{(t,n,\sigma_n(t;x,\mathbf v))}.
  \label{eq:crossing_operator}
\end{equation}
\end{definition}

The crossing operator \(\mathcal C\) is measurable. The comparison states and their one-sided limits are jointly measurable, and admissibility ensures finitely many analog crossings for each \((x,\mathbf v)\in\mathsf D_{\mathcal C}\). The analog crossing times and direction marks therefore admit a measurable enumeration without repetition~\cite{kechris2012classical}. Hence, \(\mathcal C(x,\mathbf v)(B)\) is measurable for every \(B\in\B(\mathsf E)\), as required by \eqref{eq:point_measure_sigma_algebra}.

\subsection{Stochastic Crossing Operator and Kernel}
\label{subsec:stochastic_crossing_operator_kernel}

For a fixed signal path, random threshold paths induce random analog crossings through \(\mathcal C\). This defines the stochastic crossing operator.
Throughout the operator construction, subsets of measurable spaces carry the inherited trace \(\sigma\)-algebras.

\begin{definition}[Stochastic crossing operator]
\label{def:stochastic_crossing_operator}
Define the set of admissible signal paths
\begin{equation}
  \mathsf X_{\mathcal C}
  \triangleq
  \left\{x\in\mathsf X:
    \Prob_{\mathcal C}
    \bigl\{(x,\widetilde{\mathbf V})\in\mathsf D_{\mathcal C}\bigr\}=1
  \right\}.
  \label{eq:admissible_signal_set}
\end{equation}
Write \(\B(\mathsf X_{\mathcal C})\) for its inherited trace \(\sigma\)-algebra.
The stochastic crossing operator is the \(\mathsf N_{\mathrm f}(\mathsf E)\)-valued map defined on \(\bigl\{(x,\omega_{\mathcal C})\in\mathsf X_{\mathcal C}\times\Omega_{\mathcal C}: (x,\widetilde{\mathbf V}(\omega_{\mathcal C}))\in\mathsf D_{\mathcal C}\bigr\}\) by
\begin{equation}
  \Phi_{\mathcal C}(x,\omega_{\mathcal C})
  \triangleq
  \mathcal C\!\left(
    x,\widetilde{\mathbf V}(\omega_{\mathcal C})
  \right).
  \label{eq:crossing_realization_map}
\end{equation}
For each \(x\in\mathsf X_{\mathcal C}\), \(\Phi_{\mathcal C}(x,\cdot)\) is defined \(\Prob_{\mathcal C}\)-almost surely.
\end{definition}

Probabilities involving partially defined sampling maps are evaluated on their probability-one domains.

Markov kernels describe the output distribution of each sampling stage for a given input~\cite{kallenberg1997foundations}. A kernel \(K(a,\cdot)\) is a probability measure for each input \(a\), with \(a\mapsto K(a,B)\) measurable for every measurable output set \(B\). We write
\begin{equation}
  (\mu K)(B)\triangleq
  \int K(a,B)\,\mu(da)
  \label{eq:kernel_action}
\end{equation}
for its action on an input distribution \(\mu\), and
\begin{equation}
  (K_1K_2)(a,B)\triangleq
  \int K_2(b,B)\,K_1(a,db)
  \label{eq:kernel_composition}
\end{equation}
for the composition of kernels with compatible input and output spaces.

For \(x\in\mathsf X_{\mathcal C}\) and measurable \(A\subseteq\mathsf N_{\mathrm f}(\mathsf E)\), the induced crossing kernel is
\begin{equation}
  K_{\mathcal C}(x,A)
  \triangleq
  \Prob_{\mathcal C}
  \left\{\Phi_{\mathcal C}(x,\cdot)\in A\right\}.
  \label{eq:crossing_kernel}
\end{equation}
Since \(\mathsf D_{\mathcal C}\) is Borel and \(\mathcal C\) is measurable, \(\mathsf X_{\mathcal C}\) is Borel and \(\Phi_{\mathcal C}\) is jointly measurable on its domain. Its probability-one sections therefore define a Markov kernel.

\section{Timing Operator}
\label{sec:timing_operator}

The timing operator assigns continuous-valued or quantized time coordinates to analog crossings, producing timed crossings. It preserves the number of crossings for each combination of threshold index and direction mark.

To express this requirement, we project each crossing onto its threshold index and direction mark and define the corresponding count measure.

\begin{definition}[Projection onto threshold index and direction mark]
\label{def:channel_mark_projection}
Define the space \(\mathsf M\) of pairs \((n,\sigma)\) and the projection \(\pi_{\mathrm m}\) by
\begin{equation}
  \begin{gathered}
    \mathsf M\triangleq[N]\times\{-1,-\tfrac12,0,\tfrac12,1\},\\
    \pi_{\mathrm m}:\mathsf E\to\mathsf M,
    \qquad
    \pi_{\mathrm m}(t,n,\sigma)\triangleq(n,\sigma).
  \end{gathered}
  \label{eq:channel_mark_projection}
\end{equation}
Equip \(\mathsf M\) with \(\B(\mathsf M)\triangleq 2^{\mathsf M}\). The projection \(\pi_{\mathrm m}\) is then measurable.
For \(\nu\in\mathsf N_{\mathrm f}(\mathsf E)\), define the corresponding count measure on \(\mathsf M\) by
\begin{equation}
  \bigl((\pi_{\mathrm m})_{\#}\nu\bigr)(D)
  \triangleq
  \nu\bigl(\pi_{\mathrm m}^{-1}(D)\bigr),
  \qquad D\subseteq\mathsf M.
  \label{eq:channel_mark_count_measure}
\end{equation}
\end{definition}

\begin{definition}[Timing operator]
\label{def:deterministic_timing_operator}
A timing operator is a measurable map
\begin{equation}
  \mathcal Q:\mathsf N_{\mathrm f}(\mathsf E)
  \to\mathsf N_{\mathrm f}(\mathsf E)
  \label{eq:deterministic_timing_operator}
\end{equation}
that preserves the count measure on \(\mathsf M\):
\begin{equation}
  (\pi_{\mathrm m})_{\#}\mathcal Q(\nu)
  =
  (\pi_{\mathrm m})_{\#}\nu
  \label{eq:deterministic_channel_mark_conservation}
\end{equation}
for every \(\nu\in\mathsf N_{\mathrm f}(\mathsf E)\).
\end{definition}

\begin{definition}[Stochastic timing operator]
\label{def:stochastic_timing_operator}
Let \((\Omega_{\mathcal Q},\mathcal F_{\mathcal Q},\Prob_{\mathcal Q})\) be a complete probability space. A stochastic timing operator is a jointly measurable map
\begin{equation}
  \Phi_{\mathcal Q}:
  \mathsf N_{\mathrm f}(\mathsf E)\times\Omega_{\mathcal Q}
  \to\mathsf N_{\mathrm f}(\mathsf E)
  \label{eq:timing_realization_map}
\end{equation}
such that, for \(\Prob_{\mathcal Q}\)-almost every \(\omega_{\mathcal Q}\), the map \(\Phi_{\mathcal Q}(\cdot,\omega_{\mathcal Q})\) is a timing operator in the sense of Definition~\ref{def:deterministic_timing_operator}.
\end{definition}

The induced timing kernel is
\begin{equation}
  K_{\mathcal Q}(\nu,A)
  \triangleq
  \Prob_{\mathcal Q}
  \{\Phi_{\mathcal Q}(\nu,\cdot)\in A\},
  \label{eq:timing_realization_law}
\end{equation}
for \(\nu\in\mathsf N_{\mathrm f}(\mathsf E)\) and measurable \(A\subseteq\mathsf N_{\mathrm f}(\mathsf E)\). Joint measurability of \(\Phi_{\mathcal Q}\) ensures that \(K_{\mathcal Q}\) is a Markov kernel.

The crossing operator produces a simple point measure. Finite point measures preserve the multiplicity arising when timing assigns the same time to distinct analog crossings with the same threshold index and direction mark.

\section{Selection Operator}
\label{sec:selection_operator}

The selection operator retains or removes timed crossings, producing recorded crossings with unchanged times, threshold indices, and direction marks.

For \(\mu,\nu\in\mathsf N_{\mathrm f}(\mathsf E)\), write \(\mu\leq\nu\) if
\begin{equation}
  \mu(B)\leq\nu(B)
  \quad\text{for every }B\in\B(\mathsf E).
  \label{eq:point_measure_order}
\end{equation}
This means that the multiplicity at each location in \(\mu\) is at most that in \(\nu\).

\begin{definition}[Selection operator]
\label{def:deterministic_selection_operator}
A selection operator is a measurable map
\begin{equation}
  \mathcal H:\mathsf N_{\mathrm f}(\mathsf E)
  \to\mathsf N_{\mathrm f}(\mathsf E),
  \qquad \mathcal H(\nu)\leq\nu,
  \label{eq:deterministic_selection_operator}
\end{equation}
for every \(\nu\in\mathsf N_{\mathrm f}(\mathsf E)\).
\end{definition}

\begin{definition}[Stochastic selection operator]
\label{def:stochastic_selection_operator}
Let \((\Omega_{\mathcal H},\mathcal F_{\mathcal H},\Prob_{\mathcal H})\) be a complete probability space. A stochastic selection operator is a jointly measurable map
\begin{equation}
  \Phi_{\mathcal H}:
  \mathsf N_{\mathrm f}(\mathsf E)\times\Omega_{\mathcal H}
  \to\mathsf N_{\mathrm f}(\mathsf E)
  \label{eq:selection_realization_map}
\end{equation}
such that, for \(\Prob_{\mathcal H}\)-almost every \(\omega_{\mathcal H}\), the map \(\Phi_{\mathcal H}(\cdot,\omega_{\mathcal H})\) is a selection operator in the sense of Definition~\ref{def:deterministic_selection_operator}.
\end{definition}

The induced selection kernel is
\begin{equation}
  K_{\mathcal H}(\nu,A)
  \triangleq
  \Prob_{\mathcal H}
  \{\Phi_{\mathcal H}(\nu,\cdot)\in A\},
  \label{eq:selection_realization_law}
\end{equation}
for \(\nu\in\mathsf N_{\mathrm f}(\mathsf E)\) and measurable \(A\subseteq\mathsf N_{\mathrm f}(\mathsf E)\). Joint measurability of \(\Phi_{\mathcal H}\) ensures that \(K_{\mathcal H}\) is a Markov kernel.

\section{MT Sampling Operator}
\label{sec:mt_sampling_operator}

The stochastic MT sampling operator is the composition of the stochastic crossing, timing, and selection operators. The MT sampling kernel gives the resulting distribution of recorded crossings for a fixed signal path.

\begin{definition}[Stochastic MT sampling operator]
\label{def:stochastic_mt_sampling_operator}
Let \(\Phi_{\mathcal C}\), \(\Phi_{\mathcal Q}\), and \(\Phi_{\mathcal H}\) be stochastic crossing, timing, and selection operators. Assume that their internal randomness is mutually independent. On the product of their auxiliary probability spaces, denoted by \((\Omega_{\mathrm s},\mathcal F_{\mathrm s},\Prob_{\mathrm s})\), write \(\omega_{\mathrm s}=(\omega_{\mathcal C},\omega_{\mathcal Q},\omega_{\mathcal H})\) and \(\Phi_{\mathcal J}^{\omega_{\mathcal J}}\triangleq\Phi_{\mathcal J}(\cdot,\omega_{\mathcal J})\) for \(\mathcal J\in\{\mathcal C,\mathcal Q,\mathcal H\}\). Define the stochastic MT sampling operator by
\begin{equation}
  \mathcal S(\cdot,\omega_{\mathrm s})
  \triangleq
  \Phi_{\mathcal H}^{\omega_{\mathcal H}}
  \circ
  \Phi_{\mathcal Q}^{\omega_{\mathcal Q}}
  \circ
  \Phi_{\mathcal C}^{\omega_{\mathcal C}},
  \label{eq:stochastic_mt_sampling_operator}
\end{equation}
for \(x\in\mathsf X_{\mathcal C}\) satisfying \((x,\widetilde{\mathbf V}(\omega_{\mathcal C}))\in\mathsf D_{\mathcal C}\). The map is jointly measurable on this domain and is defined \(\Prob_{\mathrm s}\)-almost surely for each \(x\in\mathsf X_{\mathcal C}\).
\end{definition}

\begin{definition}[MT sampling kernel]
\label{def:mt_sampling_kernel}
At the distribution level, write \(\mathfrak S\triangleq(K_{\mathcal C},K_{\mathcal Q},K_{\mathcal H})\) for the MT sampler. For \(x\in\mathsf X_{\mathcal C}\) and measurable \(A\subseteq\mathsf N_{\mathrm f}(\mathsf E)\), define the MT sampling kernel by
\begin{equation}
  K_{\mathfrak S}(x,A)
  \triangleq
  \Prob_{\mathrm s}\{
    \mathcal S(x,\cdot)\in A\}
  =
  (K_{\mathcal C}K_{\mathcal Q}K_{\mathcal H})(x,A).
  \label{eq:sampling_kernel_composition}
\end{equation}
\end{definition}

Integrating over the independent randomness of the three stages gives \eqref{eq:sampling_kernel_composition}. For a random signal, we integrate the MT sampling kernel against the signal distribution.

\begin{definition}[Recorded crossings]
\label{def:recorded_crossings}
Let \(Y\in\mathfrak Y\) have distribution \(P_Y\) with \(P_Y(\mathsf X_{\mathcal C})=1\), and let the sampler randomness be independent of \(Y\). On the product probability space \((\Omega\times\Omega_{\mathrm s},\mathcal F\otimes\mathcal F_{\mathrm s},\Prob\otimes\Prob_{\mathrm s})\), define the recorded crossings almost surely by
\begin{equation}
  \mathcal P(\omega,\omega_{\mathrm s})
  \triangleq
  \mathcal S\bigl(Y(\omega),\omega_{\mathrm s}\bigr).
  \label{eq:recorded_crossings}
\end{equation}
Admissibility and Fubini's theorem ensure that the defining domain has probability one. With \(P_Y\) restricted to \(\mathsf X_{\mathcal C}\), the random finite point measure \(\mathcal P\) has distribution
\begin{equation}
  P_{\mathcal P}=P_YK_{\mathfrak S}.
  \label{eq:recorded_crossing_law}
\end{equation}
For a deterministic input \(Y=x\in\mathsf X_{\mathcal C}\), this distribution reduces to \(K_{\mathfrak S}(x,\cdot)\).
\end{definition}

\section{First Recorded Crossing-Time Distributions under Nonhomogeneous Poisson Arrivals}
\label{sec:crossing_properties}

For nonhomogeneous Poisson photon arrivals and deterministic MT sampling, we derive the exact first recorded crossing-time CDF, obtain a Fourier representation under additional regularity and local monotonicity conditions, and validate numerical predictions against Monte Carlo simulations.

\subsection{Theory}
\label{subsec:poisson_crossing_theory}

\subsubsection{Signal Model and Sampler}
\label{subsubsec:poisson_signal_sampler}

We first define the signal model based on nonhomogeneous Poisson photon arrivals and a single-photon response~\cite{choong2009timing}, together with a deterministic MT sampler.

\begin{definition}[Signal from nonhomogeneous Poisson arrivals]
\label{def:poisson_photon_signal}
On \((\Omega,\mathcal F,\Prob)\), let \(\{U_j\}\) be the photon arrival times of a nonhomogeneous Poisson process on \(\R\) with integrable photon-arrival intensity \(\lambda(t)\). Let the single-photon response \(h:\R\to\R\) satisfy:
\begin{enumerate}
  \item \(h\) is bounded.
  \item \(h\) is right-continuous and has finite left limits.
  \item \(h\) is piecewise real-analytic with finitely many pieces, each extending analytically across its finite endpoints.
\end{enumerate}
Write
\begin{equation*}
  \lVert h\rVert_{\infty,\R}
  \triangleq\sup_{u\in\R}|h(u)|.
\end{equation*}
Define the random signal \(Y\in\mathfrak Y\) almost surely by
\begin{equation}
  Y(\omega)(t)\triangleq\sum_j h\bigl(t-U_j(\omega)\bigr),
  \qquad t\in\mathcal T.
  \label{eq:poisson_photon_signal}
\end{equation}
\end{definition}

\begin{definition}[A deterministic MT sampler]
\label{def:deterministic_mt_sampler}
Fix a quantization step \(\Delta>0\). Consider a deterministic MT sampler with threshold paths equal to the nominal thresholds (\(\widetilde{\mathbf V}=\mathbf V\) \(\Prob_{\mathcal C}\)-almost surely) and the time quantizer
\begin{equation*}
  r_\Delta(t)\triangleq\Delta\left\lfloor\frac{t}{\Delta}+\frac12\right\rfloor,
  \qquad t\in\R.
\end{equation*}
For \(\nu=\sum_{i=1}^{k_\nu}\delta_{(t_i,n_i,\sigma_i)}\in\mathsf N_{\mathrm f}(\mathsf E)\), define the deterministic timing and selection operators by
\begin{equation}
  \begin{gathered}
    \mathcal Q(\nu)\triangleq\sum_{i=1}^{k_\nu}
      \delta_{(r_\Delta(t_i),n_i,\sigma_i)},\\
    \mathcal H(\nu)\triangleq
      \sum_{\substack{1\leq i\leq k_\nu\\\sigma_i\in\{-1,+1\}}}
      \delta_{(t_i,n_i,\sigma_i)}.
  \end{gathered}
  \label{eq:deterministic_mt_sampler}
\end{equation}
The MT sampling operator in \eqref{eq:stochastic_mt_sampling_operator} is
\begin{equation*}
  \mathcal S(\cdot,\omega_{\mathrm s})
  =\mathcal H\circ\mathcal Q\circ\mathcal C(\cdot,\mathbf V)
\end{equation*}
on \(\mathsf X_{\mathcal C}\), for \(\Prob_{\mathrm s}\)-almost every \(\omega_{\mathrm s}\).
\end{definition}

The maps are measurable; \(\mathcal Q\) preserves the count measure on \(\mathsf M\), and \(\mathcal H(\nu)\leq\nu\). Since quantization changes only time and selection depends only on direction marks, \(\mathcal H\circ\mathcal Q=\mathcal Q\circ\mathcal H\).

\begin{proposition}[Admissibility under deterministic MT sampling]
\label{prop:poisson_signal_admissibility}
For the signal \(Y\) in Definition~\ref{def:poisson_photon_signal} and the deterministic MT sampler in Definition~\ref{def:deterministic_mt_sampler}, an admissible domain \(\mathsf D_{\mathcal C}\) of the crossing operator can be chosen such that
\begin{equation}
  P_Y(\mathsf X_{\mathcal C})=1.
  \label{eq:poisson_signal_admissibility}
\end{equation}
\end{proposition}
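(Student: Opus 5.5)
Since the thresholds are deterministic, $\widetilde{\mathbf V}=\mathbf V$ almost surely. Hence $\mathsf X_{\mathcal C}=\{x:(x,\mathbf V)\in\mathsf D_{\mathcal C}\}$ up to the null event, and it suffices to pick a Borel set $\mathsf D_{\mathcal C}$ satisfying the finite-partition condition \eqref{eq:finite_state_partition} and containing $(Y(\omega),\mathbf V)$ for $\Prob$-almost every $\omega$. The plan is to take the largest possible choice,
\begin{equation*}
  \mathsf D_{\mathcal C}\triangleq\bigl\{(x,\mathbf v)\in\mathsf X\times\mathsf X^N:\text{each }q_n(\cdot;x,\mathbf v)\text{ admits a finite partition as in \eqref{eq:finite_state_partition}}\bigr\}.
\end{equation*}
The proof then has two parts: show that this set is Borel, and show that almost every path of $Y$ lies in its section at $\mathbf V$.

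\textbf{Measurability.} Under the coordinate $\sigma$-algebra, membership should be expressible by countable operations over rational times. Right-continuity determines $x$ and $\mathbf v$ on $[T_-,T_+)$ from their values on rationals together with $T_+$, and $q_n$ having finitely many constancy intervals is a countable union over $m$. The delicate points are the state $0$ and the locations of the interval endpoints. To handle them, I would encode the condition as: ``there exist $m$ and rational-approximable endpoints such that $\operatorname{sgn} g_n$ is constant on each open piece,'' where constancy on an open interval is tested through the countable family of values at rationals plus left limits. If a direct description proves awkward, a fallback suffices, since the statement only asks that some admissible domain can be chosen. I would replace $\mathsf D_{\mathcal C}$ by a Borel subset of the form $\mathsf X'\times\{\mathbf V\}$, where $\mathsf X'$ is a Borel set of paths of the form $x=\sum_{j=1}^k h(\cdot-u_j)$ with $k$ finite. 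This set is the image of the measurable map $(k,u_1,\dots,u_k)\mapsto x$ restricted to the good parameter set constructed below. It is Borel once that map is injective on good sets, or alternatively analytic and hence universally measurable, and completeness of $\Prob$ then absorbs the rest.

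\textbf{Almost-sure finiteness.} Integrability of $\lambda$ gives finitely many arrivals $U_j$ in all of $\R$ almost surely; this is also why $Y$ is well defined, bounded by $k\lVert h\rVert_{\infty,\R}$, and lies in $\mathfrak Y$ because $\Exp k=\int\lambda<\infty$. On the event $\{k<\infty\}$, $Y(\omega)$ is a finite sum of shifts of $h$. Each shift is càdlàg and piecewise real-analytic with finitely many breakpoints on $\mathcal T$. Merging all breakpoints gives a finite partition of $\mathcal T$ on whose open pieces $g_n=Y-V_n$ is real-analytic and extends analytically to the closed piece, because each constituent extends across its endpoints. On such a piece, either $g_n\equiv0$, in which case $q_n=0$ is constant, or $g_n$ has finitely many zeros on the compact closure by the identity theorem. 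Refining the partition at these zeros makes $q_n$ constant on each open subinterval. This holds for every $\omega$ with $k<\infty$, so $P_Y(\mathsf X_{\mathcal C})\ge\Prob\{k<\infty\}=1$.

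The main obstacle is the measurability of $\mathsf D_{\mathcal C}$, in particular turning ``finitely many state changes'' into a countable condition without hidden uncountable quantifiers. I would attack it first with the rational-time encoding, and fall back on the universally measurable image argument if that fails.
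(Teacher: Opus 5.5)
Your almost-sure finiteness argument is the paper's. You merge the shifted breakpoints, use the analytic extension of \(\Psi_k(\mathbf u)-V_n\) on each closed piece (either identically zero or with finitely many zeros), refine the partition, and bound \(\Exp\lVert Y\rVert_\infty\le\lVert h\rVert_{\infty,\R}\Lambda\). The gap is the step you flag yourself. Admissibility requires \(\mathsf D_{\mathcal C}\) to be a \emph{Borel} subset of \(\mathsf X\times\mathsf X^N\), and \(P_Y(\mathsf X_{\mathcal C})\) is only defined if \(\mathsf X_{\mathcal C}\in\B(\mathsf X)\). Neither of your routes delivers this.
\begin{itemize}
\item The maximal-domain encoding is only a plan. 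The partition points are still real and existentially quantified. Values at rational times also miss an isolated contact at an irrational time. You would have to construct the change times as Borel functionals of the path, and that work is not done.
\item The injectivity option in your fallback fails. The map \((k,\mathbf u)\mapsto\Psi_k(\mathbf u)\) is permutation invariant, and for a causal \(h\) every arrival after \(T_+\) leaves the path on \(\mathcal T\) unchanged. So the map is not even countable-to-one, and Lusin--Souslin is unavailable.
\item The analytic-image option yields only a universally measurable set. Completeness of \(\Prob\) on \(\Omega\) has no bearing on whether a subset of \(\mathsf X\) is Borel.
\end{itemize}

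The fallback can be rescued by one standard step. Let \(K\) be the total photon count. Every Borel \(C\supseteq\mathsf X'\) has \(P_Y(C)\ge\Prob\{K<\infty\}=1\). So the universally measurable \(\mathsf X'\) has completed \(P_Y\)-measure one and contains a Borel \(B\) with \(P_Y(B)=1\). The partition property passes to subsets, so \(B\times\{\mathbf V\}\) is admissible.

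The paper instead shows that \(\mathsf G_h=\{0\}\cup\bigcup_{k\ge1}\Psi_k(\R^k)\) is itself Borel. It splits \(\R^k\) into finitely many \(\sigma\)-compact cells by the ordering, including ties, of the shifted breakpoints and \(T_\pm\). On each cell \(\Psi_k\) is \(J_1\)-continuous, by aligning breakpoints with piecewise-linear time changes. Hence \(\mathsf G_h\) is a countable union of \(J_1\)-compact sets, and so Borel, because the \(J_1\) Borel \(\sigma\)-algebra is the coordinate one. This explicit choice, with \(\mathsf X_{\mathcal C}=\mathsf G_h\), is what Proposition~\ref{prop:poisson_crossing_exact_law} relies on when it evaluates \(\vartheta_k\) at \(\Psi_k(\mathbf u)\) for every \(\mathbf u\in\R^k\). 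With an inexplicit \(B\), \(\vartheta_k\) would be defined only almost everywhere.
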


\begin{proof}
For \(k\geq1\), define
\begin{equation*}
  \begin{gathered}
  \Psi_k(\mathbf u)(t)\triangleq\sum_{j=1}^k h(t-u_j),
  \qquad t\in\mathcal T,\\
  \mathsf G_h\triangleq\{0\}\cup\bigcup_{k\geq1}\Psi_k(\R^k).
  \end{gathered}
\end{equation*}
Every path in \(\mathsf G_h\) belongs to \(\mathsf X\). Partition \(\mathcal T\) at the shifted breakpoints of \(h\). On each resulting open interval, each difference \(\Psi_k(\mathbf u)-V_n\) has an analytic extension to a neighborhood of the interval closure. This extension is either identically zero or has finitely many zeros there. Refining the partition at these zeros verifies \eqref{eq:finite_state_partition} for every pair in \(\mathsf G_h\times\{\mathbf V\}\).

To show that \(\mathsf G_h\) is Borel, partition \(\R^k\) into finitely many parts according to the ordering, including equalities, of the shifted breakpoints and \(T_-,T_+\). Each part is a countable union of compact sets. On each part, aligning corresponding breakpoints by piecewise-linear time changes shows that \(\Psi_k\) is continuous in the Skorokhod \(J_1\) topology. Its image is therefore a countable union of \(J_1\)-compact sets, so \(\mathsf G_h\in\B(\mathsf X)\).

The total photon count is finite almost surely, so \(Y\in\mathsf G_h\) almost surely. Coordinate measurability and
\begin{equation*}
  \Exp\lVert Y\rVert_\infty
  \leq\lVert h\rVert_{\infty,\R}\int_{\R}\lambda(u)\,du<\infty
\end{equation*}
give \(Y\in\mathfrak Y\). Thus \(\mathsf D_{\mathcal C}=\mathsf G_h\times\{\mathbf V\}\) is an admissible domain. Since \(\widetilde{\mathbf V}=\mathbf V\) almost surely, \(\mathsf X_{\mathcal C}=\mathsf G_h\), and hence \(P_Y(\mathsf X_{\mathcal C})=1\).
\end{proof}

Using the admissible domain constructed in the proof of Proposition~\ref{prop:poisson_signal_admissibility}, Definition~\ref{def:recorded_crossings} applies. Suppressing the realization arguments, we write
\begin{equation}
  \mathcal P
  =(\mathcal H\circ\mathcal Q\circ\mathcal C)(Y,\mathbf V)
  \label{eq:deterministic_poisson_recorded_crossings}
\end{equation}
\((\Prob\otimes\Prob_{\mathrm s})\)-almost surely. Below, we regard \(\mathcal P\) as a random finite point measure on \((\Omega,\mathcal F,\Prob)\), since the sampler is deterministic.

\subsubsection{Exact First Recorded Crossing-Time Distributions}
\label{subsubsec:poisson_exact_crossing_distributions}

\begin{definition}[First recorded crossing time]
\label{def:selected_crossing_time}
Fix a threshold index \(n\in[N]\) and a bounded open recorded-time interval \(I=(a,b)\subset\R\). For \(\nu\in\mathsf N_{\mathrm f}(\mathsf E)\), define
\begin{equation}
  \tau_{n,I}(\nu)\triangleq\inf\bigl\{t\in I:\nu(\{t\}\times\{n\}\times\{-1,+1\})>0\bigr\},
  \label{eq:poisson_crossing_selector}
\end{equation}
with \(\inf\varnothing=\infty\); recorded crossings at \(a\) or \(b\) are excluded. Equip \(I\cup\{\infty\}\) with the \(\sigma\)-algebra
\begin{equation*}
  \B(I\cup\{\infty\})
  \triangleq
  \{A\cap(I\cup\{\infty\}):A\in\B(\overline{\R})\},
\end{equation*}
where \(\overline{\R}\triangleq\R\cup\{-\infty,\infty\}\). For the recorded crossings \(\mathcal P\), write \(\tau\triangleq\tau_{n,I}(\mathcal P)\), and define its CDF by
\begin{equation}
  \begin{gathered}
  F(t)\triangleq\Prob\{\tau\leq t\},\qquad t\in\R,\\
  \Prob\{\tau<\infty\}=\lim_{t\to b^-}F(t).
  \end{gathered}
  \label{eq:crossing_detection_distribution}
\end{equation}
The probability \(\Prob\{\tau<\infty\}\) is called the recording probability: the probability of recording at least one crossing at threshold \(n\) within the recorded-time interval \(I\).
\end{definition}

The map \(\tau_{n,I}\) is measurable by \eqref{eq:point_measure_sigma_algebra}. Time quantization makes \(F\) a step function with jumps only in \(I\cap\Delta\mathbb Z\).

Conditioning on the total photon count gives the exact CDF and a truncation error bound uniform in time.

\begin{proposition}[Exact first recorded crossing-time CDF and truncation error]
\label{prop:poisson_crossing_exact_law}
Fix \(\tau\) as in Definition~\ref{def:selected_crossing_time}, and let
\begin{equation}
  \Lambda\triangleq\int_{\R}\lambda(u)\,du,
  \label{eq:poisson_total_intensity}
\end{equation}
so the total photon count \(K\) has distribution \(\operatorname{Pois}(\Lambda)\).
For \(k\geq1\) and \(\mathbf u\in\R^k\), let \(\Psi_k(\mathbf u)\) be the signal path defined in the proof of Proposition~\ref{prop:poisson_signal_admissibility}, and set
\begin{equation*}
  \vartheta_k(\mathbf u)
  \triangleq\tau_{n,I}\bigl((\mathcal H\circ\mathcal Q\circ\mathcal C)(\Psi_k(\mathbf u),\mathbf V)\bigr).
\end{equation*}
The unconditional CDF is, for every \(t\in\R\),
\begin{equation}
  F(t)=e^{-\Lambda}\sum_{k=1}^{\infty}\frac{1}{k!}\int_{\R^k}\ind\{\vartheta_k(\mathbf u)\leq t\}\prod_{j=1}^k\lambda(u_j)\,d\mathbf u.
  \label{eq:exact_poisson_crossing_law}
\end{equation}

For an integer \(k_{\max}\geq0\), let \(F_{k_{\max}}(t)\) be the sum in \eqref{eq:exact_poisson_crossing_law} restricted to \(1\leq k\leq k_{\max}\), with an empty sum equal to zero. Then, for every \(t\in\R\),
\begin{equation}
  0\leq F(t)-F_{k_{\max}}(t)\leq\Prob\{K>k_{\max}\}=1-e^{-\Lambda}\sum_{k=0}^{k_{\max}}\frac{\Lambda^k}{k!}.
  \label{eq:poisson_count_cdf_bound}
\end{equation}
\end{proposition}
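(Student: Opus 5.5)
We condition on the total photon count $K$ and use the standard order-statistics representation of a nonhomogeneous Poisson process with finite total intensity. Since $\Lambda<\infty$, $K\sim\operatorname{Pois}(\Lambda)$ is almost surely finite. Given $K=k\geq1$, the arrival times are i.i.d.\ with density $\lambda/\Lambda$, up to a uniformly random labeling. The signal path $Y=\Psi_k(U_1,\dots,U_k)$ is symmetric in its arguments, so the labeling is irrelevant. Given $K=0$, we have $Y\equiv0$; we then check that $\tau=\infty$, so this event contributes nothing to $\{\tau\leq t\}$ for finite $t$. This requires $V_n\neq0$ or a convention; if $V_n=0$ the zero path is identically at threshold, has no state changes in $(T_-,T_+)$, and hence no crossings, so $\tau=\infty$ in all cases.

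The first step is measurability. For each $k$, the map $\mathbf u\mapsto\vartheta_k(\mathbf u)$ must be Borel on $\R^k$. It is the composition of three maps. The first is $\Psi_k$, which is Borel into $\mathsf X$ because it is piecewise $J_1$-continuous on the Borel partition constructed in the proof of Proposition~\ref{prop:poisson_signal_admissibility}. The second is $\mathcal C(\cdot,\mathbf V)$, measurable on $\mathsf D_{\mathcal C}=\mathsf G_h\times\{\mathbf V\}$. The third is $\tau_{n,I}\circ\mathcal H\circ\mathcal Q$, measurable by the remarks following Definitions~\ref{def:deterministic_mt_sampler} and~\ref{def:selected_crossing_time}. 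Hence $\ind\{\vartheta_k(\mathbf u)\leq t\}$ is Borel and the integrals make sense. Also, on $\{K=k\}$ we have $\tau=\vartheta_k(U_1,\dots,U_k)$ almost surely by \eqref{eq:deterministic_poisson_recorded_crossings}.

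The second step is the formula. By the law of total probability,
\begin{equation*}
F(t)=\sum_{k\geq1}\Prob\{K=k\}\,\Prob\{\vartheta_k(U_1,\dots,U_k)\leq t\mid K=k\}.
\end{equation*}
Substituting $\Prob\{K=k\}=e^{-\Lambda}\Lambda^k/k!$ and the conditional density $\prod_j\lambda(u_j)/\Lambda^k$, the factors $\Lambda^k$ cancel, giving \eqref{eq:exact_poisson_crossing_law}. If $\lambda$ could vanish identically, then $\Lambda=0$ and both sides are zero, which is trivially consistent. Nonnegativity of all terms justifies the countable summation (Tonelli).

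The truncation bound is immediate. $F(t)-F_{k_{\max}}(t)$ is the sum of the terms with $k>k_{\max}$, each of which is nonnegative. Each term is at most $e^{-\Lambda}\Lambda^k/k!$, since the indicator is at most one. Summing gives $\Prob\{K>k_{\max}\}$, and the bound is uniform in $t$. The main obstacle is the careful justification of the conditional i.i.d.\ representation together with measurability of $\vartheta_k$. I will cite the standard Poisson process result (e.g.~\cite{last2018lectures}) for the former, and rely on the Borel structure from Proposition~\ref{prop:poisson_signal_admissibility} for the latter.
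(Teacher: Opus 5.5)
Your proposal is correct and matches the paper's proof essentially step for step. Both condition on $K$ using the i.i.d.\ density-$\lambda/\Lambda$ representation, drop the $K=0$ term because the zero path yields no recorded crossings, treat $\Lambda=0$ separately, and bound the tail by $\Prob\{K>k_{\max}\}$. You bound termwise, whereas the paper identifies $F_{k_{\max}}(t)=\Prob\{\tau\le t,\,K\le k_{\max}\}$, but this is the same estimate. Your explicit check that the zero path has no crossings even when $V_n=0$ is a small but useful addition, since this proposition does not assume $V_n>0$.
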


\begin{proof}
Each \(\vartheta_k\) is measurable by composition. For \(\Lambda>0\), conditional on \(K=k\geq1\), the photon point measure has the same distribution as \(\sum_{j=1}^k\delta_{\widetilde U_j}\), where \(\widetilde U_1,\ldots,\widetilde U_k\) are independent with common density \(\lambda(u)/\Lambda\)~\cite{last2018lectures}. Integrating \(\ind\{\vartheta_k(\mathbf u)\leq t\}\) against these densities and applying the law of total probability gives \eqref{eq:exact_poisson_crossing_law}. The \(K=0\) term vanishes because the zero signal produces no recorded crossings. If \(\Lambda=0\), then \(\tau=\infty\) almost surely, and both sides are zero.

For every \(k_{\max}\geq0\) and \(t\in\R\), \(F_{k_{\max}}(t)=\Prob\{\tau\leq t,\ K\leq k_{\max}\}\). Hence
\begin{equation*}
  F(t)-F_{k_{\max}}(t)=\Prob\{\tau\leq t,\ K>k_{\max}\}
  \leq\Prob\{K>k_{\max}\},
\end{equation*}
which gives \eqref{eq:poisson_count_cdf_bound} by the Poisson distribution of \(K\).
\end{proof}

Under additional regularity and local monotonicity conditions, the CDF can be computed from the signal distributions at the endpoints of the corresponding analog time interval.

\begin{theorem}[Exact Fourier representation under monotonicity]
\label{thm:monotone_crossing_fourier}
Consider the signal and deterministic MT sampler in Definitions~\ref{def:poisson_photon_signal} and~\ref{def:deterministic_mt_sampler}. Fix \(n\), \(I\), and \(F\) as in Definition~\ref{def:selected_crossing_time}, with \(V_n>0\). Define the effective analog time window \(\mathcal T_I\triangleq\overline{(T_-,T_+)\cap r_\Delta^{-1}(I)}\). Assume that:
\begin{enumerate}
  \item There exists a fixed direction mark \(\sigma_\star\in\{-1,+1\}\) such that, for \(\lambda(u)\,du\)-almost every \(u\), the function \(t\mapsto\sigma_\star h(t-u)\) is continuous on \(\mathcal T\) and nondecreasing on \(\mathcal T_I\).
  \item At each fixed time, the probability that the signal equals the threshold is zero:
  \begin{equation}
    \Prob\{Y(s)=V_n\}=0,
    \qquad s\in\mathcal T.
    \label{eq:monotone_threshold_atomlessness}
  \end{equation}
\end{enumerate}
For \(s\in\mathcal T\) and \(\xi\in\R\), define
\begin{equation}
  \varphi_s(\xi)
  \triangleq
  \exp\!\left\{
    \int_{\R}\lambda(u)
    \bigl[e^{\mathrm i\xi h(s-u)}-1\bigr]\,du
  \right\}.
  \label{eq:poisson_signal_characteristic_function}
\end{equation}
For each \(t\in\R\), let
\begin{equation}
  J_t\triangleq\bigl\{s\in(T_-,T_+):r_\Delta(s)\in I,\quad r_\Delta(s)\leq t\bigr\}.
  \label{eq:monotone_effective_time_interval}
\end{equation}
The analog time interval \(J_t\) is the preimage of \(I\cap(-\infty,t]\) under \(r_\Delta\), restricted to the interior of the observation window. If \(J_t\ne\varnothing\), it has positive length; write
\begin{equation*}
  \ell_t\triangleq\inf J_t,
  \qquad c_t\triangleq\sup J_t.
\end{equation*}
Then \(F(t)=0\) whenever \(J_t=\varnothing\). Otherwise,
\begin{equation*}
  F(t)=\sigma_\star\bigl[\Prob\{Y(c_t)>V_n\}
    -\Prob\{Y(\ell_t)>V_n\}\bigr].
\end{equation*}
The Fourier representation is
\begin{equation}
  F(t)=\frac{\sigma_\star}{\pi}\lim_{R\to\infty}\int_0^R
  \frac{\operatorname{Im}\!\left[e^{-\mathrm i\xi V_n}
  \bigl(\varphi_{c_t}(\xi)-\varphi_{\ell_t}(\xi)\bigr)\right]}{\xi}\,d\xi.
  \label{eq:monotone_recorded_crossing_cdf}
\end{equation}
The integrand at \(\xi=0\) is defined by its continuous extension.
\end{theorem}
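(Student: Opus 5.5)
The plan is to reduce the event \(\{\tau\le t\}\) to a statement about the signal at the two endpoints of \(J_t\), and then invert characteristic functions via the Gil-Pelaez formula.

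\emph{Structure of \(J_t\).} Since \(r_\Delta\) is nondecreasing, \(r_\Delta^{-1}(I\cap(-\infty,t])\) is an interval. Its endpoints are half-integer multiples of \(\Delta\) or are governed by \(a\) and \(b\). Whenever it meets \((T_-,T_+)\), the intersection is a nondegenerate interval, because every nonempty preimage of a grid point under \(r_\Delta\) is a half-open interval of length \(\Delta\). So \(J_t\) is an interval with \(\ell_t<c_t\) and \(\ell_t,c_t\in\mathcal T_I\). If \(J_t=\varnothing\), no analog crossing has recorded time in \(I\cap(-\infty,t]\), and \(F(t)=0\).

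\emph{Pathwise monotonicity.} Take \(\sigma_\star=+1\); the case \(-1\) is symmetric, with inequalities reversed. Almost surely \(K<\infty\) and every shifted response \(h(\cdot-U_j)\) is continuous on \(\mathcal T\) and nondecreasing on \(\mathcal T_I\). Hence \(Y\) is continuous on \(\mathcal T\) and nondecreasing on \(\mathcal T_I\). By the crossing definitions, the comparison state \(q_n\) is then nondecreasing on \(\mathcal T_I\). A transition of mark \(+1\) occurs in the interior of an interval \(J\subseteq\mathcal T_I\) exactly when the state passes from \(-1\) to \(+1\) there.

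A marked \(+1\) crossing can fail to appear only if \(Y\) sits at level \(V_n\) on an interval, which produces marks \(\pm\tfrac12\). That requires \(Y(s)=V_n\) for some rational \(s\), and this has probability zero by \eqref{eq:monotone_threshold_atomlessness} and countable additivity. The same argument shows that \(Y(\ell_t)\ne V_n\) and \(Y(c_t)\ne V_n\) almost surely, and that no crossing occurs exactly at an endpoint of \(J_t\) unless \(Y\) equals \(V_n\) there.

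Recorded crossings have times \(r_\Delta(s)\) with \(s\) an analog crossing time. Therefore \(\tau\le t\) holds iff some recorded \(\pm1\) crossing has analog time \(s\in J_t\). Up to null sets this is \(\{Y(\ell_t)<V_n<Y(c_t)\}\). Monotonicity gives \(\{Y(\ell_t)>V_n\}\subseteq\{Y(c_t)>V_n\}\), so
\[
F(t)=\Prob\{Y(c_t)>V_n\}-\Prob\{Y(\ell_t)>V_n\}.
\]

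\emph{Endpoint bookkeeping.} I expect this step to be the main obstacle. One must check whether \(\ell_t,c_t\) themselves belong to \(J_t\) and treat the boundary \(T_\pm\) and the exclusion of recorded times \(a,b\) correctly. All such discrepancies are events where \(Y\) equals \(V_n\) at a fixed endpoint, so they are null. The positivity \(V_n>0\) excludes the zero path, and contributions from \(K=0\), from appearing.

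\emph{Fourier step.} By Campbell's theorem for the Poisson process, \(Y(s)\) has characteristic function \(\varphi_s\) in \eqref{eq:poisson_signal_characteristic_function}; \(h\) bounded and \(\lambda\) integrable make it well defined. Since \(\Prob\{Y(s)=V_n\}=0\), the Gil-Pelaez inversion gives
\[
\Prob\{Y(s)>V_n\}=\tfrac12+\tfrac1\pi\lim_{R\to\infty}\int_0^R\frac{\operatorname{Im}[e^{-\mathrm i\xi V_n}\varphi_s(\xi)]}{\xi}\,d\xi .
\]
This form holds even without densities, since it needs only that \(V_n\) is not an atom. Subtracting the formulas at \(c_t\) and \(\ell_t\) cancels the \(\tfrac12\) terms and yields \eqref{eq:monotone_recorded_crossing_cdf}. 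The integrand is bounded near \(0\) because \(\varphi_s\) is differentiable, \(Y(s)\) having finite mean. This justifies the continuous extension at \(\xi=0\).
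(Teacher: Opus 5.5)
Your proposal is correct and follows the paper's proof essentially step for step. The shared steps are: almost-sure continuity and monotonicity of $\sigma_\star Y$ on $\mathcal T_I$; the rational-times use of \eqref{eq:monotone_threshold_atomlessness} to exclude level intervals and crossings at $\ell_t,c_t$; the pathwise identity $\ind\{\tau\le t\}=\sigma_\star[\ind\{Y(c_t)>V_n\}-\ind\{Y(\ell_t)>V_n\}]$; and the Poisson exponential (Campbell) formula with Gil--Pelaez inversion, where the finite mean of $Y(s)$ justifies the continuous extension at $\xi=0$.
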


\begin{proof}
If \(J_t=\varnothing\), then \(F(t)=0\). Otherwise, almost surely all arrivals satisfy the first assumption, so the finite response sum \(\sigma_\star Y\) is continuous on \(\mathcal T\) and nondecreasing on \(\mathcal T_I\). Condition~\eqref{eq:monotone_threshold_atomlessness} excludes intervals of positive length on which \(Y=V_n\), by considering rational times. Together with continuity, it also excludes analog crossings at \(\ell_t\) or \(c_t\). Thus \(J_t\subseteq\mathcal T_I\) contains at most one analog crossing at threshold \(n\), with direction mark \(\sigma_\star\).

Consequently, almost surely,
\begin{equation*}
  \begin{aligned}
    \ind\{\tau\leq t\}
    &=\ind\bigl\{\mathcal C(Y,\mathbf V)
      (J_t\times\{n\}\times\{-1,+1\})\geq1\bigr\}\\
    &=\sigma_\star\bigl[\ind\{Y(c_t)>V_n\}
      -\ind\{Y(\ell_t)>V_n\}\bigr].
  \end{aligned}
\end{equation*}
The first equality follows from the definitions of \(\mathcal Q\), \(\mathcal H\), and \(J_t\); the second uses local monotonicity. Taking expectations gives the stated probability relation.

The Poisson exponential formula~\cite{last2018lectures} gives \(\varphi_s(\xi)=\Exp[e^{\mathrm i\xi Y(s)}]\). By \eqref{eq:monotone_threshold_atomlessness}, the Gil--Pelaez inversion formula~\cite{gil1951note} yields
\begin{equation*}
  \Prob\{Y(s)>V_n\}=\frac12+\frac1\pi\lim_{R\to\infty}
  \int_0^R
  \frac{\operatorname{Im}[e^{-\mathrm i\xi V_n}\varphi_s(\xi)]}{\xi}
  \,d\xi.
\end{equation*}
Subtracting at \(s=c_t\) and \(s=\ell_t\) and multiplying by \(\sigma_\star\) gives \eqref{eq:monotone_recorded_crossing_cdf}. Since \(\Exp|Y(s)|\leq\Lambda\lVert h\rVert_{\infty,\R}<\infty\), the integrand at \(\xi=0\) extends to \(\Exp[Y(c_t)-Y(\ell_t)]\).
\end{proof}

\begin{figure*}[!t]
  \centering
  \includegraphics[width=\textwidth]{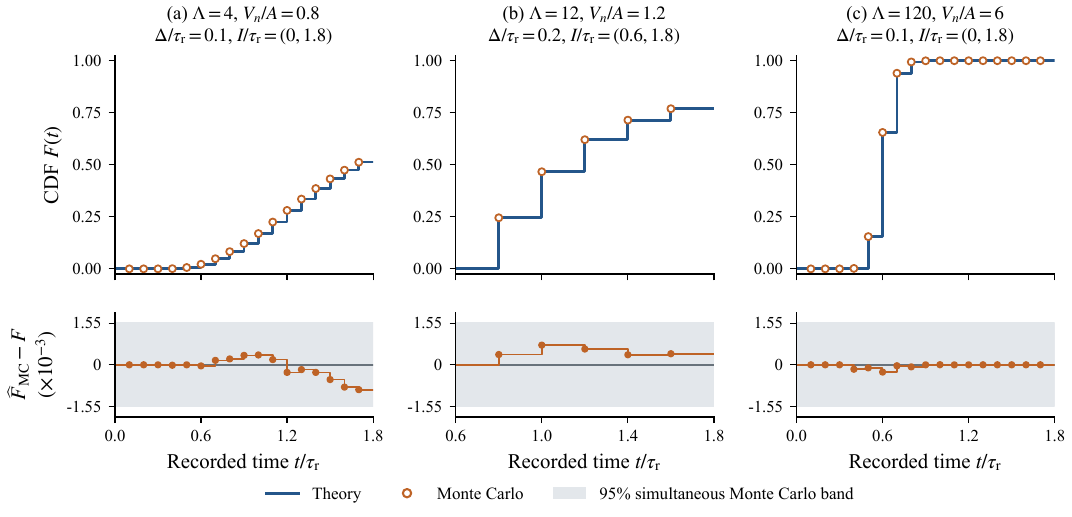}
  \caption{Validation of \eqref{eq:biexponential_recorded_crossing_cdf} for three parameter settings. Top: the numerically evaluated CDF (solid steps) and the empirical CDF (circles). Bottom: the CDF difference \(\widehat F_{\mathrm{MC}}-F\), with a 95\% simultaneous Monte Carlo band covering all recorded crossing times and all three settings.}
  \label{fig:eq64_validation}

  \par\vspace{0.8em}
  \centering
  \includegraphics[width=\textwidth]{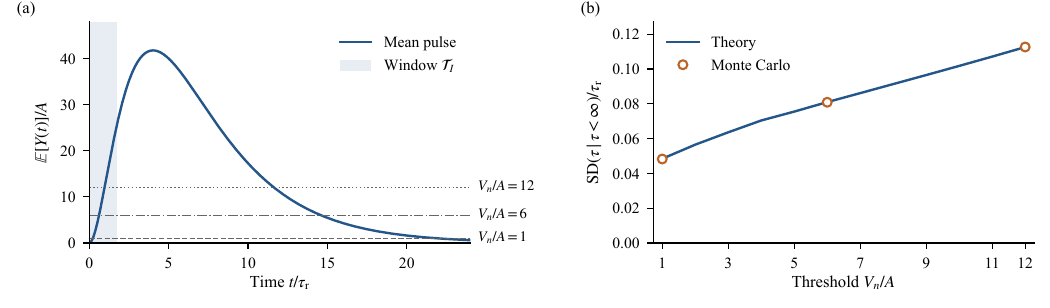}
  \caption{Pulse context and crossing-time standard deviation, with parameters as in Fig.~\ref{fig:eq64_validation}(c). (a) Normalized mean pulse \(\Exp[Y(t)]/A\) and representative thresholds \(V_n/A=1,6,12\). Shading marks the effective analog time window \(\mathcal T_I=[0,1.75\tau_{\mathrm r}]\); the mean pulse is extended beyond the observation window to show its shape. (b) Standard deviation of the first recorded crossing time versus threshold. The line is calculated from the theoretical CDF; circles are independent Monte Carlo results with \(2\times10^5\) realizations per threshold.}
  \label{fig:threshold_sweep}
\end{figure*}

We apply Theorem~\ref{thm:monotone_crossing_fourier} to the rising edge of a scintillation pulse.

\begin{corollary}[Exponentially decaying photon-arrival intensity and a biexponential single-photon response]
\label{cor:exponential_biexponential_crossing}
Consider the signal and deterministic MT sampler in Definitions~\ref{def:poisson_photon_signal} and~\ref{def:deterministic_mt_sampler}, with
\begin{equation}
  \begin{gathered}
    \lambda(t)=\frac{\Lambda}{\tau_{\mathrm s}}
      e^{-t/\tau_{\mathrm s}}\ind\{t\geq0\},\\
    h(t)=A\left(e^{-t/\tau_{\mathrm d}}-e^{-t/\tau_{\mathrm r}}\right)
      \ind\{t\geq0\},
  \end{gathered}
  \label{eq:scintillation_example_model}
\end{equation}
where \(\Lambda,A,\tau_{\mathrm s}>0\) and \(0<\tau_{\mathrm r}<\tau_{\mathrm d}\)~\cite{van2010comprehensive,acerbi2019understanding}. In the notation of Theorem~\ref{thm:monotone_crossing_fourier}, take \(V_n>0\) and a bounded open recorded-time interval \(I\) such that
\begin{equation}
  \begin{gathered}
    \varnothing\ne\mathcal T_I\subseteq[0,t_{\mathrm p}],\\
    t_{\mathrm p}\triangleq
      \frac{\tau_{\mathrm r}\tau_{\mathrm d}}{\tau_{\mathrm d}-\tau_{\mathrm r}}
      \ln\frac{\tau_{\mathrm d}}{\tau_{\mathrm r}}.
  \end{gathered}
  \label{eq:biexponential_monotone_window}
\end{equation}
Here, \(t_{\mathrm p}\) is the peak time of the single-photon response \(h\).
With \(J_t\), \(\ell_t\), and \(c_t\) as in Theorem~\ref{thm:monotone_crossing_fourier}, \(F(t)=0\) if \(J_t=\varnothing\). Otherwise,
\begin{equation}
  F(t)=\frac1\pi\lim_{R\to\infty}\int_0^R
    \frac{\operatorname{Im}\!\left[
      e^{-\mathrm i\xi V_n}
      \left(e^{L_{c_t}(\xi)}-e^{L_{\ell_t}(\xi)}\right)
    \right]}{\xi}\,d\xi.
  \label{eq:biexponential_recorded_crossing_cdf}
\end{equation}
Here, for \(s\geq0\) and \(\xi\in\R\),
\begin{equation}
  \begin{gathered}
    L_s(\xi)=\frac{\Lambda e^{-s/\tau_{\mathrm s}}}{\tau_{\mathrm s}}
      \sum_{m=1}^{\infty}\frac{(\mathrm iA\xi)^m}{m!}\\
    \times\sum_{j=0}^{m}(-1)^j\binom{m}{j}
      E_s\!\left(\frac{m-j}{\tau_{\mathrm d}}+
        \frac{j}{\tau_{\mathrm r}}-\frac1{\tau_{\mathrm s}}\right),
  \end{gathered}
  \label{eq:biexponential_characteristic_series}
\end{equation}
where \(E_s(z)=(1-e^{-zs})/z\) for \(z\ne0\), and \(E_s(0)=s\).
\end{corollary}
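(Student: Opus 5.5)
The corollary follows by applying Theorem~\ref{thm:monotone_crossing_fourier} with \(\sigma_\star=+1\) and then computing \(\varphi_s\) in closed form. I would proceed in three steps.

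\emph{Step 1: standing hypotheses of Definition~\ref{def:poisson_photon_signal}.} The intensity \(\lambda\) is integrable with total mass \(\Lambda\). The response \(h\) is bounded by \(A\), is right-continuous with left limits, and is piecewise real-analytic with the two pieces \((-\infty,0)\) and \([0,\infty)\). Each piece is entire, so it extends across the breakpoint \(0\).

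\emph{Step 2: assumption~1 with \(\sigma_\star=+1\).} Since \(\lambda\) vanishes on \((-\infty,0)\), it suffices to take \(u\geq0\). Because \(h(0)=0\) and \(h\) is continuous on \([0,\infty)\), the map \(t\mapsto h(t-u)\) is continuous on \(\R\). Its derivative for \(t>u\) is positive exactly when \(t-u<t_{\mathrm p}\), which follows from solving \(e^{-v/\tau_{\mathrm d}}/\tau_{\mathrm d}=e^{-v/\tau_{\mathrm r}}/\tau_{\mathrm r}\). On \(\mathcal T_I\subseteq[0,t_{\mathrm p}]\) the function is zero for \(t<u\) and increasing for \(u\leq t\leq t_{\mathrm p}\), because then \(t-u\leq t\leq t_{\mathrm p}\). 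Hence it is nondecreasing on \(\mathcal T_I\).

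\emph{Step 3: assumption~2, the main obstacle.} We need \(\Prob\{Y(s)=V_n\}=0\) for every \(s\in\mathcal T\).
\begin{itemize}
\item If \(s\leq0\), then \(Y(s)=0\) almost surely, and \(V_n>0\) gives the claim.
\item If \(s>0\), condition on \(K=k\). If \(\lambda\) gives zero mass to \([0,s)\), which cannot happen since \(\lambda>0\) on \([0,\infty)\), then \(Y(s)=0\). Otherwise the i.i.d.\ arrivals have density \(\lambda/\Lambda\), and \(Y(s)=\sum_j h(s-\widetilde U_j)\).
\item When \(k=0\), \(Y(s)=0\neq V_n\).
\item When \(k\geq1\), the event \(Y(s)=V_n\) is the zero set of the real-analytic function \(\mathbf u\mapsto\sum_j h(s-u_j)-V_n\) on each of the finitely many cells determined by whether \(u_j<s\). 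This function is not identically zero on the open cell where some \(u_j<s\): there \(\partial_{u_j}\) of the sum is \(-h'(s-u_j)\), and \(h'\) is nonzero except at the single point \(s-u_j=t_{\mathrm p}\), so the function is nonconstant.
\item On the cell where all \(u_j\geq s\), the function equals \(-V_n\neq0\).
\end{itemize}
The zero set of a nonzero analytic function on a connected open set is Lebesgue-null, and \(\lambda\,du\) is absolutely continuous, so the conditional probability is \(0\) for every \(k\). Summing over \(k\) gives the claim.

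\emph{Step 4: closed form for \(\varphi_s\).} The theorem now gives \eqref{eq:monotone_recorded_crossing_cdf} with \(\sigma_\star=1\). The formula defines \(L_s\) only for \(s\geq0\), and \(\ell_t,c_t\in\mathcal T_I\subseteq[0,t_{\mathrm p}]\), so those are the only values needed. For \(s\geq0\), write
\[
\log\varphi_s(\xi)=\frac{\Lambda}{\tau_{\mathrm s}}\int_0^s e^{-u/\tau_{\mathrm s}}\bigl[e^{\mathrm i\xi h(s-u)}-1\bigr]\,du ,
\]
where the range \(u>s\) contributes nothing because \(h(s-u)=0\) there. Expand \(e^{\mathrm i\xi h}-1=\sum_{m\geq1}(\mathrm i\xi)^m h^m/m!\). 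The expansion may be integrated termwise by dominated convergence, since \(|h|\leq A\) and \(\sum_m (|\xi|A)^m/m!<\infty\). Substitute \(v=s-u\) and use the binomial expansion
\[
h(v)^m=A^m\sum_{j=0}^{m}(-1)^j\binom{m}{j}e^{-v((m-j)/\tau_{\mathrm d}+j/\tau_{\mathrm r})}.
\]
Then \(e^{-(s-v)/\tau_{\mathrm s}}=e^{-s/\tau_{\mathrm s}}e^{v/\tau_{\mathrm s}}\), and each term yields
\[
\int_0^s e^{-vz}\,dv=E_s(z),\qquad z=\frac{m-j}{\tau_{\mathrm d}}+\frac{j}{\tau_{\mathrm r}}-\frac{1}{\tau_{\mathrm s}},
\]
with \(E_s(0)=s\) covering the case \(z=0\). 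This reproduces \eqref{eq:biexponential_characteristic_series}, so \(\varphi_s=e^{L_s}\), and substituting into \eqref{eq:monotone_recorded_crossing_cdf} gives \eqref{eq:biexponential_recorded_crossing_cdf}.
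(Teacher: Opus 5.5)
Your proposal is correct and follows the same skeleton as the paper's proof. You verify the hypotheses of Definition~\ref{def:poisson_photon_signal} and obtain the first assumption of Theorem~\ref{thm:monotone_crossing_fourier} with $\sigma_\star=+1$ from the monotonicity of $h$ on $[0,t_{\mathrm p}]$. You then establish \eqref{eq:monotone_threshold_atomlessness} and identify $\varphi_s=e^{L_s}$ by termwise integration and the binomial theorem.

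The one genuine difference is in the atomlessness step. The paper conditions on the number of arrivals in $(0,s)$ rather than on the total count. This discards the arrivals with $u\geq s$, which contribute $h(s-u)=0$. Each remaining summand $h(s-U)$ is then atomless because $u\mapsto h(s-u)$ has finite level sets on $(0,s)$, and an independent sum of atomless variables is atomless.

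You instead condition on the total count $K$, reusing the representation from Proposition~\ref{prop:poisson_crossing_exact_law}. There each summand $h(s-\widetilde U_j)$ has an atom at $0$. That is why you need two further ingredients: the decomposition of $\R^k$ into the $2^k$ cells indexed by which $u_j<s$, and the fact that a nonconstant real-analytic function on a connected open set has a Lebesgue-null zero set.

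The paper's route is more elementary, relying only on one-dimensional level sets and a convolution argument. Yours needs only absolute continuity of the joint law of the arrivals and the same conditioning as Proposition~\ref{prop:poisson_crossing_exact_law}, at the cost of a multivariate analytic lemma.

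Two minor remarks, neither affecting the argument. The aside about $\lambda$ giving zero mass to $[0,s)$ is unnecessary. Calling the exponent in \eqref{eq:poisson_signal_characteristic_function} $\log\varphi_s$ is loose notation, since you only need that this exponent equals $L_s(\xi)$.
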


\begin{proof}
The intensity is integrable, and \(h\) is bounded, continuous, and real-analytic on each side of zero with analytic extensions. Thus Definition~\ref{def:poisson_photon_signal} and Proposition~\ref{prop:poisson_signal_admissibility} apply. Since \(h\) increases on \([0,t_{\mathrm p}]\) and vanishes on negative times, every \(h(\cdot-u)\), \(u\geq0\), is continuous on \(\mathcal T\) and nondecreasing on \(\mathcal T_I\). This verifies the first assumption of Theorem~\ref{thm:monotone_crossing_fourier} with \(\sigma_\star=+1\).

For \(s\leq0\), \(Y(s)=0\) almost surely. For \(s>0\), condition on the number of arrivals in \((0,s)\). Given a positive count, the arrival times can be represented by independent variables with densities. Let \(U\) denote one such arrival-time variable. Each contribution \(h(s-U)\) is atomless because \(u\mapsto h(s-u)\) has finite level sets on \((0,s)\); their sum is therefore atomless. Thus the only possible atom of \(Y(s)\) is at zero, and \(\Prob\{Y(s)=V_n\}=0\).

Expanding the integrand in \eqref{eq:poisson_signal_characteristic_function}, applying the binomial theorem, and integrating each exponential gives \(\varphi_s(\xi)=e^{L_s(\xi)}\). Termwise integration is justified by
\begin{equation*}
  \int_0^s\lambda(u)\left(e^{|\xi|h(s-u)}-1\right)\,du
  \leq\Lambda\left(e^{|\xi|A}-1\right)<\infty.
\end{equation*}
Theorem~\ref{thm:monotone_crossing_fourier} now gives \eqref{eq:biexponential_recorded_crossing_cdf}, with the stated conventions.
\end{proof}

\subsection{Numerical Experiments}
\label{subsec:poisson_crossing_numerical_experiments}

We validate \eqref{eq:biexponential_recorded_crossing_cdf} by comparing its numerical evaluation with independent Monte Carlo simulations. We set \(\tau_{\mathrm d}/\tau_{\mathrm r}=4\), \(\tau_{\mathrm s}/\tau_{\mathrm r}=2\), and \(\mathcal T=[0,2\tau_{\mathrm r}]\). The three settings in Fig.~\ref{fig:eq64_validation} satisfy the effective analog time window condition in Corollary~\ref{cor:exponential_biexponential_crossing}. Case (b) checks that recorded crossings at or before the lower endpoint of \(I\) are excluded.

We evaluate the characteristic function using \eqref{eq:poisson_signal_characteristic_function} and assess numerical convergence by refining the Fourier cutoff, quadrature resolution, and crossing-time tolerance.

For each setting, we generate \(10^6\) independent photon-arrival realizations, locate analog crossings by bisection, and apply time quantization and selection. The empirical CDF \(\widehat F_{\mathrm{MC}}\) uses all \(10^6\) realizations as its denominator. The 95\% simultaneous Monte Carlo band is obtained from the Dvoretzky--Kiefer--Wolfowitz inequality~\cite{massart1990tight} and a union bound over the three settings.

Figure~\ref{fig:eq64_validation} shows a maximum absolute CDF discrepancy of \(9.24\times10^{-4}\) across the three settings, within the 95\% simultaneous Monte Carlo band of half-width \(1.55\times10^{-3}\).

We examine the standard deviation of the first recorded crossing time over \(V_n/A=1,2,\ldots,12\), with the other parameters as in Fig.~\ref{fig:eq64_validation}(c). Figure~\ref{fig:threshold_sweep}(a) locates representative thresholds on the mean pulse, \(\Exp[Y(t)]=\int_{\R}\lambda(u)h(t-u)\,du\), with \(\Lambda=120\). The minimum recording probability is approximately \(0.9999989\). We calculate \(\operatorname{SD}(\tau\mid\tau<\infty)\) from the CDF increments normalized by the recording probability. Figure~\ref{fig:threshold_sweep}(b) shows that this standard deviation increases from \(0.04839\tau_{\mathrm r}\) to \(0.11257\tau_{\mathrm r}\), in agreement with independent Monte Carlo simulations at \(V_n/A=1,6,12\), with \(2\times10^5\) realizations each.

\section{Discussion}
\label{sec:discussion}

\subsection{Implications of the Framework and Crossing-Time Distributions}
\label{subsec:discussion_implications}

The framework provides a common description of recorded crossings for parameter estimation, waveform reconstruction, and comparisons of MT samplers. Finite point measures accommodate variable crossing counts and preserve multiplicity after time quantization. Separating timing from selection distinguishes changes in time coordinates from reductions in crossing counts.

For the Poisson signal model and deterministic sampler considered here, Proposition~\ref{prop:poisson_crossing_exact_law} gives an exact first recorded crossing-time CDF and a truncation error bound uniform in time for admissible signal paths, including nonmonotone paths. Under additional regularity and local monotonicity conditions, Theorem~\ref{thm:monotone_crossing_fourier} reduces the CDF to a difference of probabilities that the signal exceeds the threshold at the endpoints of the corresponding analog time interval. This reduction enables the one-dimensional Fourier representation evaluated in the numerical experiments.

Figure~\ref{fig:threshold_sweep}(b) shows that, for the specified Poisson signal model and deterministic sampler, lower thresholds give smaller standard deviations of the first recorded crossing time over the tested range. This comparison applies when signal paths are nondecreasing on the effective analog time window and the signal-to-noise ratio is high enough to give recording probabilities close to one at the tested thresholds. At low signal-to-noise ratios, late analog crossings at high thresholds can have quantized times outside the recorded-time interval \(I\). The restriction to \(I\) can then reduce the standard deviation by excluding these late crossings. Agreement with independent Monte Carlo simulations supports using the CDF to compare threshold settings.

\subsection{Scope and Further Research}
\label{subsec:discussion_scope}

The framework assumes mutually independent randomness across sampling stages, independent of the signal. Shared randomness or additional signal information requires extensions of the operator inputs and joint probability model.

Further research should determine threshold requirements for identifiability. Joint distributions of recorded crossings would support bias analysis and Bayes risk comparisons against minimum Bayes risk~\cite{berger1987statistical}, particularly for amplitude-domain and time-domain weighted least squares with optimized weights. Waveform reconstruction calls for necessary and sufficient conditions for exact recovery of individual signal paths. Threshold settings should be optimized for estimation or reconstruction and compared across estimators or reconstructors. System optimization should compare performance gains and costs after jointly optimizing threshold settings and the estimator or reconstructor.

\section{Conclusion}
\label{sec:conclusion}

The proposed framework defines the signal space and sampling operators for MT sampling. Composing their associated Markov kernels maps the signal distribution to the distribution of recorded crossings. The framework distinguishes model mismatch, signal noise, threshold mismatch, and threshold noise, and uses finite point measures to accommodate variable crossing counts and preserve multiplicity after time quantization.

For nonhomogeneous Poisson photon arrivals and deterministic MT sampling, we derived the exact first recorded crossing-time CDF at a specified threshold within a recorded-time interval. Under additional regularity and local monotonicity conditions, this CDF admits a one-dimensional Fourier representation. Numerical CDFs and standard deviations for a scintillation pulse model agree with independent Monte Carlo simulations. The framework supports further studies of MT performance limits, parameter estimation, and waveform reconstruction.

\section*{Acknowledgment}

The authors thank Junhao Yu for fruitful discussions. OpenAI Codex assisted with drafting and revising the abstract and Sections I--IX, checking mathematical derivations, and developing the numerical code used in Section VII. The authors take responsibility for the final content.

\bibliographystyle{IEEEtran}
\bibliography{MT_sampler-ref}

\end{document}